\pgfplotsset{width=6cm,compat=1.12}
\title{\LARGE \bf LQR Control with Sparse Adversarial Disturbances}
\author{Samuel Pfrommer and Somayeh Sojoudi \thanks{This work was supported by grants from ONR, NSF and C3.ai Digital Transformation Institute. Samuel Pfrommer and Somayeh Sojoudi are with the Department of Electrical Engineering and Computer Sciences at the University of California, Berkeley. {\tt Email: sam.pfrommer@berkeley.edu, sojoudi@berkeley.edu}}}
\newcommand{\R}{{\mathbb R}}
\newcommand{\Z}{{\mathbb Z}}
\newcommand{\Zp}{{\mathbb Z}_{+}}
\newcommand{\I}{I}
\DeclareMathOperator*{\E}{\mathbb{E}}
\newcommand{\T}{^\intercal}
\newcommand{\defeq}{\vcentcolon=}
\newcommand{\eigmin}[1]{\lambda_{\min}(#1)}
\newcommand{\isep}{\mathrel{{.}\,{.}}\nobreak}
\newcommand{\bic}{[\,}
\newcommand{\eic}{\,]}
\newcommand{\bio}{(\,}
\newcommand{\eio}{\,)}
\newcommand{\J}[1][k]{{J_t^{#1}}}
\newcommand{\Jp}[1][k]{{J_{t+1}^{#1}}}
\renewcommand{\P}[1][t]{P_{#1}}
\newcommand{\Pp}{P_{t+1}}
\newcommand{\Pb}{\hat{P}}
\newcommand{\K}[1][t]{{K_{#1}}}
\newcommand{\F}{{F_t}}
\newcommand{\Ln}[1][k]{{r_t^{#1}}}
\newcommand{\Le}{r}
\newcommand{\Lp}[1][k]{{r_{t+1}^{#1}}}
\newcommand{\Lt}[1][k]{{\tilde{r}_t^{#1}}}
\newcommand{\Lpt}[1][k]{{\tilde{r}_{t+1}^{#1}}}
\newcommand{\Li}[1][k_i]{{r_{t_i}^{#1}}}
\newcommand{\Lj}[1][k_j']{{r_{t_j'}^{#1}}}
\newcommand{\Lpi}[1][k_i]{{r_{t_i+1}^{#1}}}
\newcommand{\Lpti}[1][k_i]{{\tilde{r}_{t_i+1}^{#1}}}
\newcommand{\C}[1][k]{{c_t^{#1}}}
\newcommand{\Ct}[1][k]{{\tilde{c}_t^{#1}}}
\newcommand{\Cpt}[1][k]{{\tilde{c}_{t+1}^{#1}}}
\renewcommand{\Pr}[1][k]{{p_t^{#1}}}
\newcommand{\Pre}{p}
\newcommand{\Pri}[1][k_i]{{p_{t_i}^{#1}}}
\newcommand{\Prb}[1][k]{{\overline{p}_t^{#1}}}
\newcommand{\Prib}[1][k_i]{{\overline{p}_{t_i}^{#1}}}
\newcommand{\w}{w_k}
\newcommand{\wb}{\hat w}
\newcommand{\xkp}{x_{t_k'}}
\newcommand{\ykp}{y_{t_k'}}
\newcommand{\Pkp}{P_{t_k'}}
\theoremstyle{customstyle}
\newtheorem{theorem}{Theorem}
\newtheorem{assumption}{Assumption}
\newtheorem{lemma}{Lemma}
\newtheorem{remark}{Remark}
\begin{document}

\maketitle
\thispagestyle{empty}
\pagestyle{empty}

%%%%%%%%%%%%%%%%%%%%%%%%%%%%%%%%%%%%%%%%%%%%%%%%%%%%%%%%%%%%%%%%%%%%%%%%%%%%%%%%
\begin{abstract}
    Recent developments in cyber-physical systems and event-triggered control have led to an increased interest in the impact of sparse disturbances on dynamical processes. We study Linear Quadratic Regulator (LQR) control under sparse disturbances by analyzing three distinct policies: the blind online policy, the disturbance-aware policy, and the optimal offline policy. We derive the two-dimensional recurrence structure of the optimal disturbance-aware policy, under the assumption that the controller has information about future disturbance values with only a probabilistic model of their locations in time. Under mild conditions, we show that the disturbance-aware policy converges to the blind online policy if the number of disturbances grows sublinearly in the time horizon. Finally, we provide a finite-horizon regret bound between the blind online policy and optimal offline policy, which is proven to be quadratic in the number of disturbances and in their magnitude. This provides a useful characterization of the suboptimality of a standard LQR controller when confronted with unexpected sparse perturbations.
\end{abstract}

\section{Introduction}
Much of the recent focus on sparse disturbances is motivated by the study of cyber-physical systems, which couple physical processes, networking, and computation into a larger controller architecture. Research in this area seeks to provide robustness against malicious cyberattacks and Stuxnet-style adversaries. Here, adversaries often act sparsely in both measurement channels and time in order to avoid detection and efficiently use their limited computational resources. Early works on this topic focused largely on state estimation where adversaries can perturb a sparse subset of measurement channels \cite{fawzi2014secure, lu2017secure, Yong2016RobustAR, Mo2016OnTP}. In \cite{Anubi2018RobustRS}, the authors examined the signal reconstruction problem for cyber-physical systems with limited knowledge of the support of the attack vector. This was further explored in \cite{xie2017secure}, which studied state estimation in the context of sparse actuator attacks. Sparsity in time was first introduced by \cite{jingyang2016sparse}, which restricted attacks to a small sequence of consecutive time steps and considered the adversarial goal of maximizing the system estimation error of a Kalman filter. We generalize this notion of sparsity to nonconsecutive time steps.

Sparsity has also gained significant attention in the realm of event-triggered control (ETC). Many cyber-physical systems face communication bandwidth constraints and actuation limits, inspiring event-triggered techniques that provide sparse-in-time control signals \cite{banno2021data, gommans2015resource, Li2018ModelBasedAE}. Model-based ETC instead addresses the issue of sparse disturbances with time-varying magnitudes \cite{brunner2016dynamic}. In model-based ETC, both the sensor and the actuator maintain an internal model of the system, and the sensor only communicates to the actuator if it is triggered by a large deviation from the nominal state. Numerical experiments in \cite{brunner2016dynamic} demonstrate that this approach is particularly sample-effective when dealing with sparse disturbances.

Fault detection is a related field which aims to detect and predict sparse anomalies in system operation. Early work focused on linear systems, using techniques such as detection filters \cite{white1987detection} and parameter estimation \cite{isermann1984process}. More recently, \cite{sanz2016enhanced} considers non-sparse disturbance prediction for continuous-time systems by computing higher order derivatives of the disturbance signal. Interesting related work on distributed parameter systems employed detection residuals to estimate fault functions and time to failure \cite{cai2016model}. Learning-based approaches have also been considered for model-unknown nonlinear systems \cite{zhengdao2008new}. Our novel disturbance-aware policy assumes access to such probabilistic predictions of future disturbances. Modeling and estimating future disturbances is outside of the scope of this work and would rely on techniques similar to those in the fault detection literature.

Other recent research efforts have focused on the role of sparsity in identification of dynamical systems. Inspired by modeling challenges in large-scale control problems, \cite{feng2021learning} considers the identification of linear time-invariant systems under large and sparse disturbances. These disturbances can model both stealthy adversarial attacks and faults in system operation. The authors introduce sparsity via the $\Delta$-spaced disturbance model, which assumes that adversaries are unable to perturb the system twice within $\Delta$ time steps. Under this assumption, it is shown that perfect recovery of the system matrix is possible under certain sufficient conditions. This work builds off of \cite{fattahi2018data} and \cite{fattahi2018sample}, which examine system identification when the dynamics matrices themselves have a sparse structure and provide sample complexity guarantees for accurate system estimation.

While this is the first work to explicitly examine sparse disturbances in an LQR context, we draw inspiration from several recent advancements in the area of policy regret. Leveraging new techniques from the online convex optimization community, \cite{agarwal2019online} formulates a regret minimization problem in the case of a linear dynamical system with convex cost functions and arbitrary bounded disturbances. Their proposed algorithm achieves $O(\sqrt{T})$ regret against the optimal stable linear policy. This is extended in \cite{Hazan2020TheNC}, which considers the case of unknown dynamics matrices and provides an algorithm with an $O(T^{2/3})$ regret bound. \cite{goel2020power} reimposes the LQR condition of quadratic cost functions and fully characterizes the recursive structure of the optimal offline policy under this framework. Furthermore, they show that the time-averaged regret of the blind online policy with respect to the optimal offline policy grows linearly in the time horizon. We extend these results to the sparse-disturbance setting with a bound that is independent of the time horizon and varies quadratically with the number of disturbances and their magnitude. Our complete contributions are summarized in \ref{sec:contributions}.

\subsection{Notation}
The spectral norm of a matrix is denoted by $\rho(\cdot)$, and the $l_2$ norm of a vector is denoted by $\| \cdot \|$. We let $M \succ 0$ and $M \succeq 0$ indicate that $M$ is positive definite and positive semidefinite, respectively. The minimum eigenvalue of a matrix is given by $\eigmin{\cdot}$. An upper bound on the norm of a collection of matrices or vectors is generally denoted using a hat; e.g., $\hat w \geq \| \w \| ~ \forall k$. We denote the set of integers by $\Z$, the set of nonnegative integers by $\Zp$, the set of real numbers by $\R$, and the expectation of a random variable by $\E$. The set of integers between $a \in \Z$ and $b \in \Z$ inclusive is denoted by $\bic a \isep b \eic$, with $\bio a \isep b \eio$ corresponding to the open integer interval omitting $a$ and $b$.

\section{Problem Statement} \label{sec:probstatement} 
We consider the control of a linear time-invariant system that evolves according to
\begin{align*}
    x_{t+1} = A x_t + B u_t + d_t, \quad t = 0, 1, \dots, T-1,
\end{align*}
where $A \in \R^{n \times n}$ and $B \in \R^{n \times m}$ are system matrices, $x_t \in \R^n$ is the system state, $u_t \in \R^m$ is the control input, $d_t \in \R^n$ is an external disturbance, and $T$ is the time horizon. Moreover, we focus on the scenario where the disturbances $d_t$ are \textit{sparse}; i.e., $d_t \neq 0$ if and only if $t \in D$, for some index set $D = \{ t_1, t_2, \dots t_{|D|}\}$, where $|D|$ denotes the cardinality of $D$. The nonzero disturbances are given by $d_{t_k} = \w$, $k \in 1, \dots, |D|$, and are bounded in $l_2$ norm by $\wb \geq \| \w \| ~ \forall k$.

The objective of our policies is to minimize the cost
\begin{align} \label{eqn:cost}
    \E_{d_0, \dots d_{T-1}} \Big( x_T\T Q_T x_T + \sum_{t=0}^{T-1} \left[ x_t\T Q x_t + u_t\T R u_t \right] \Big),
\end{align}
where $Q_T, Q \in \R^{n\times n}$ are positive semidefinite and $R \in \R^{m \times m}$ is positive definite. We compare the following three policies in this work:
\begin{enumerate}
    \item The \textbf{blind online policy} executes standard LQR control assuming no disturbances; i.e., assuming $d_t = 0$ for all $t$, even though this assumption may not hold. This is the same policy that would be executed under the typical condition that disturbances $d_t$ are independent with zero mean. The optimal disturbance-free policy is then a linear controller of the form ${u_t = -K_t x_t}$, for $t=\bic 0 \isep T \eio$, where 
    \begin{align} 
        K_t &= (B\T \Pp B + R)^{-1}B\T \Pp A, \label{eqn:feedback}
    \end{align}
    and the sequence $\P$ arises from solving the discrete-time Riccati equation
    \begin{align} 
        \mathclap{\P = A\T (\Pp - \Pp B (B\T \Pp B + R)^{-1} B\T \Pp) A + Q,\qquad} \label{eqn:riccati}
    \end{align}
    with $\P[T] = Q_T$ \cite{bertsekas2012dynamic, kalman1960contributions}.
\item The \textbf{disturbance-aware policy} has knowledge of the disturbances $\w$ and their ordering but is unaware of their exact locations in time. More specifically, at time step $t$, the policy is aware of potential future disturbances $w_{k'}, \dots, w_{|D|}$, where $k'$ is the smallest $k$ such that $t_{k} > t$. Instead of knowing the precise disturbance locations $t_k$, the policy is given access to a probabilistic model of a disturbance occurring at time step $t$ when there are $k$ disturbances remaining. We denote this probability as $\Pr$. One simple probabilistic model is the uniform conditional model $p_t^k = k/(T - t)$, which assumes that the remaining disturbances are distributed uniformly over the time horizon. The form of the policy is derived explicitly in Section~\ref{sec:distawarestruct}.
    % TODO: add note about disturbances being correlated
    \item The \textbf{optimal offline policy} has complete knowledge of all disturbances $d_t$. Under this setting, the optimal control and cost have the following structure, first derived in \cite{goel2020power}:
    \begin{subequations}
    \begin{align}
        u_t^* &= - K_t x_t - (B\T \Pp B + R)^{-1} B\T (\Pp d_t + \sfrac{1}{2} ~ v_{t+1}), \\
        V_t(x) &= x\T \P x + v_t\T x + q_t, \label{eqn:optofflinecost} \\[10pt]
        v_t &= 2 A\T S_t d_t + A\T S_t \Pp^{-1} v_{t+1}, \label{eqn:optofflinevector}\\
        q_t &= q_{t+1} + d_t \T S_{t+1} d_t + v_{t+1}\T \Pp^{-1} S_t d_t \nonumber \\
            & \quad - \sfrac{1}{4} ~ v_{t+1}\T B (B\T \Pp B + R)^{-1} B\T v_{t+1}, \label{eqn:optofflinescalar}\\
        S_t &\defeq \Pp - \Pp B (B\T \Pp B + R)^{-1} B\T \Pp, \label{eqn:optofflineS}
    \end{align}
    \end{subequations}
    where $V_t(x)$ represents the cost-to-go of a state $x$ at time $t$, and $v_t \in \R^n$, $q_t \in \R$ are recurrences that depend only on the noise $d_t$ and are initialized to $v_T = 0$, $q_T = 0$. We note that the optimal control is determined by the optimal online policy action and a counterfactual second term depending exclusively on future disturbances.
\end{enumerate}

\subsection{Contributions} \label{sec:contributions}
We begin by deriving the optimal structure for the disturbance-aware policy in Section~\ref{sec:distawarestruct}, which involves a two-dimensional recurrence over the time step $t$ and the number of remaining disturbances $k$. This holds for any arbitrary disturbance sequence. Unlike the optimal offline policy originally considered by \cite{goel2020power}, we only assume probabilistic knowledge about the possibility of a disturbance. In practice, future disturbances and their probabilities could be inferred from data using techniques similar to the fault prediction literature, although this is outside the scope of our work.

Section~\ref{sec:distawareconv} establishes that under mild conditions, the disturbance-aware policy converges to the blind online policy in the infinite-horizon case if the number of disturbances grows sublinearly in the time horizon. This formalizes the intuition that as disturbances become more sparse, without exact knowledge of disturbance times, the disturbance-aware policy eventually takes the same actions as the blind policy. We note that this is not the case for the optimal offline policy, which can leverage its precise knowledge of disturbance times to modify its control strategy. 

Section~\ref{sec:blindregret} provides a regret bound between the blind online policy and the optimal offline policy. Specifically, we show that the regret grows quadratically in the number of disturbances and in their magnitude. We accomplish this via an intermediate reduction to the cost of the blind online policy without disturbances, which is also of independent interest. This bound is relevant in a practical setting where naive LQR controllers are deployed in a real-world environment with sparse disturbances.

\section{The Disturbance-aware Policy} \label{sec:distaware}

We derive the structure of the disturbance-aware policy, which takes the form of a two-dimensional recursion. We then show that under a particular sparsity condition the optimal control action converges to that of the blind online policy in the infinite-horizon case. For notational convenience, in this section disturbances $d_{t_k} = \w$ are indexed in reverse chronological order, such that $t_1 \geq t_2 \geq \cdots \geq t_{|D|}$. 
\subsection{Disturbance-aware policy structure} \label{sec:distawarestruct}
\begin{theorem} The form of the optimal disturbance-aware policy described in Section~\ref{sec:probstatement} is given by
\begin{align}
    (u_t^k)^* &= -\K x_t -(R + B\T \Pp B)^{-1} B\T (\Pr \Pp \w + \Lpt), \nonumber\\
    \J(x) &= x \T \P x + 2 \Ln\T x + \C, \nonumber \\[10pt]
    %\Ln &= A\T (I - \F \Pp)\T \Lpt + \Pr A\T (I - \F \Pp)\T \Pp \w, \nonumber \\
    \Ln &= A\T (I - \F \Pp)\T (\Lpt + \Pr \Pp \w), \nonumber \\
    \C &= \Cpt - (\Lpt + \Pr \Pp \w)\T \F (\Lpt + \Pr \Pp \w) \label{eqn:distawarecontrol} \\
       & \qquad + \Pr \w\T \Pp \w + 2 \Pr \Lp[k-1] \w, \nonumber \\[10pt] 
    \F &\defeq B (R + B\T \Pp B)^{-1} B\T, \nonumber \\
    \Lt &= \Prb \Ln + \Pr \Ln[k-1], \nonumber \\
    \Ct &= \Prb \C + \Pr \C[k-1], \nonumber  
\end{align}
where $\P$ is obtained from $\Pp$ via the Riccati sequence \eqref{eqn:riccati} and $\Ln \in \R^{n}$ and $\C \in \R$ are recurrences indexed by both $t \in \bic 0 \isep T \eic$ and $k \in \bic 0 \isep |D| \eic$, initialized to $P_{T}=Q_T$, $r_{T}^k=0$, and $c_{T}^k=0$ for all $k$. Here $(u_t^k)^*$ is the expectation-optimal action at time step $t$ with $k$ possible future disturbances.
\end{theorem}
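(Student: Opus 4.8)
The plan is to establish the claimed form by backward induction on $t$, carrying the whole family of value functions indexed by $k \in \bic 0 \isep |D| \eic$ at once. The inductive hypothesis is $\Jp(x) = x\T \Pp x + 2(\Lp)\T x + \Cp$ for every $k$; the base case $t=T$ is immediate, since the terminal cost $x\T Q_T x$ forces $P_T = Q_T$, $r^k_T = 0$, and $c^k_T = 0$. Because the second index couples $k$ to $k-1$, the step from $t+1$ to $t$ must be carried out for all $k$ simultaneously.

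First I would write Bellman's equation at $(t,k)$, conditioning on the single random event at this step: with probability $\Pr$ the disturbance $\w$ strikes and the count drops to $k-1$, while with probability $\Prb = 1 - \Pr$ it does not and the count stays at $k$. This gives
\begin{align*}
    \J(x) = x\T Q x + \min_{u} \Big[ & u\T R u + \Pr\, \Jp[k-1](A x + B u + \w) \\
    & + \Prb\, \Jp(A x + B u) \Big].
\end{align*}
Substituting the hypothesis and abbreviating the post-decision state by $z = A x + B u$, the expectation merges the two successors $\Jp[k-1]$ and $\Jp$ into a single quadratic $z\T \Pp z + 2\,\hat r\T z + \hat c$ in $z$. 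Its effective linear term is $\hat r = \Lpt + \Pr \Pp \w$, the sum of the probability-weighted average $\Lpt$ of $\Lp[k-1]$ and $\Lp$ and the shift $\Pr \Pp \w$ produced by adding $\w$ to the state; its constant is $\hat c = \Cpt + \Pr \w\T \Pp \w + 2\Pr (\Lp[k-1])\T \w$. Forming these probability-weighted averages — exactly the auxiliary quantities $\Lt$ and $\Ct$ of the statement, evaluated at $t+1$ — is the crux of the argument.

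It then remains to minimize the displayed quadratic over $u$. Since $z$ is affine in $u$, this is the ordinary LQR least-squares problem, and completing the square yields the minimizer $(u_t^k)^* = -\K x - (R + B\T \Pp B)^{-1} B\T \hat r$ with $\K$ as in \eqref{eqn:feedback}. Substituting $u^*$ back and collecting terms by their degree in $x$ reads off the three recurrences: the quadratic coefficient reproduces the Riccati recursion \eqref{eqn:riccati}, so the $\P$ sequence is inherited unchanged from the blind policy; the linear coefficient gives $\Ln = A\T(I - \F \Pp)\T \hat r$; and the constant collects $\hat c - \hat r\T \F \hat r$ into the stated expression for $\C$. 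The one nonroutine identity needed to bring the linear term into this clean form is $\K\T R (R + B\T \Pp B)^{-1} B\T = A\T (I - \Pp \F) \Pp \F$, which follows from $B\T \Pp B = (R + B\T \Pp B) - R$. The hard part is not any single estimate but the bookkeeping: keeping the probability weighting consistent between the disturbance-shift term $\Pr \Pp \w$ and the averaging that defines $\Lt, \Ct$, and verifying that the merged linear and constant terms telescope exactly into the claimed $\Ln$ and $\C$. Once the matrix identity is in hand, matching the coefficients of $x\T(\cdot)x$, of $x$, and of the constant closes the induction for all $k$ at once.
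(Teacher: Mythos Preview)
Your proposal is correct and follows essentially the same route as the paper: backward induction on $t$ with the quadratic value-function hypothesis, a Bellman step that conditions on whether the disturbance fires (weighting by $\Pr$ and $\Prb$), and then minimization over $u$. The only cosmetic difference is that the paper packages the minimization as a Schur complement of the block matrix in $(u,a)$ with $a = [x\T~\w\T~(\Lp)\T~(\Lp[k-1])\T]\T$, whereas you merge the expectation into a single quadratic in $z = Ax+Bu$ and complete the square directly; these are the same computation, and your extra identity for the linear coefficient is not strictly needed if one reads it off from the completed square as $A\T(I-\Pp\F)\hat r$.
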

\begin{proof}[Sketch of proof]
The derivation of the optimal disturbance-aware policy mirrors the classic dynamic programming approach \cite{bertsekas2012dynamic}. Our induction hypothesis is that the cost at time step $t+1$ with $k$ possible future disturbances is of the form
\begin{align*}
    \Jp(x) = x\T \P x + 2 \Ln\T x + \C.
\end{align*}
We proceed backwards by induction, optimizing for the expected cost-minimizing control
\begin{align*}
    \J(x) = \min_u \quad \big\{ &x\T Q x + u\T R u \\
    + &\Prb \Jp(A x + B u) + \Pr \Jp[k-1] (Ax + B u + \w) \big\},
\end{align*}
where $\Prb = 1 - \Pr$. The last two terms represent the expected value of the cost-to-go, which is decomposed over the probability of a disturbance at that time step. Since we assume access to the probabilistic model $\Pr$ and disturbances $\w$, this recursion can be evaluated. Paralleling derivations in \cite{goel2020power}, we expand $\Jp$ and $\Jp[k-1]$ and collect terms to write
\begin{align*}
    \J(x) = \min_u
    \begin{bmatrix} u \\ a \end{bmatrix} \T
    \begin{bmatrix}
        R + B^T \Pp B & z \\
        z\T & M
    \end{bmatrix}
    \begin{bmatrix} u \\ a \end{bmatrix}
    + \Cpt,
\end{align*}
with the following additional definitions:
\begin{align*}
    a &= \begin{bmatrix} x\T & \w\T & \Lp\T & \Lp[k-1] \T \end{bmatrix} \T, \\
    z &= \begin{bmatrix} B \T \Pp A & \Pr B \T \Pp & \Prb B \T & \Pr B \T \end{bmatrix}, \\
    M &=
    \begin{bmatrix}
        Q + A\T \Pp A & A\T \Pp & \Prb A \T & \Pr A \T \\
        \Pp A & \Pr \Pp & 0 & \Pr \I \\
        \Prb A & 0 & 0 & 0 \\
        \Pr A & \Pr \I & 0 & 0
    \end{bmatrix}.
\end{align*}
We now apply the Schur complement to write the optimal control and cost in closed form, attaining the form specified in the theorem statement.
\end{proof}
As in \cite{goel2020power}, the optimal control is affine in $x$ with the same feedback matrix \eqref{eqn:feedback} as the blind online policy and an additional term that depends on future disturbances and the probability model. Note that $(u_t^k)^*$ now depends on both $k$ and $t$.

\subsection{Convergence to the blind online policy} \label{sec:distawareconv}
We show that under mild conditions on the system and probability model, the optimal disturbance-aware policy derived in the previous section converges to the blind online policy if the number of disturbances grows sublinearly in the time horizon.
\begin{assumption} \label{ass:controllability}
    The pair $(A,B)$ is \textit{controllable} and $Q$ can be decomposed as $Q = L\T L$ for some matrix $L$ such that $(A,L)$ is \textit{observable}. Moreover, the Riccati sequence $\P$ is stabilizing such that
    \[
        \rho(A - B \K) < 1 - \gamma
    \]
    for all $t$ and some $\gamma > 0$, where $\K$ is the feedback matrix \eqref{eqn:feedback} in the standard LQR controller.
\end{assumption}
Controllability and observability are standard assumptions in the classic LQR derivation and are sufficient to show that the controller converges to a stabilizing solution. Assumption~\ref{ass:controllability} also stipulates that intermediate iterations of the Riccati equation produce stabilizing controllers. This is immediately satisfied for many choices of the terminal cost $Q_T$ \cite{DeSouza1989Stabilizing, Bitmead1985MonotonicityAS}. Additionally, the Riccati sequence is known to converge exponentially to a stable solution \cite{kailath2000linear}, and hence the number of intermediate iterations violating Assumption~\ref{ass:controllability} is generally negligible. The positive constant $\gamma$ simply quantifies the level of stability. 

\begin{assumption} \label{ass:decay}
    The disturbance probability $\Pr$ approaches zero if the number of disturbances $k$ grows sublinearly in $t$. Formally, let $k_i$ be a monotonically increasing sequence in $\Zp$ and $t_i$ be a monotonically strictly decreasing sequence in $\bio -\infty \isep T \eic$. Then if $k_i / t_i \to 0$ as $i \to \infty$, the disturbance probability $\Pri \to 0$.
\end{assumption}

Note that in order to show infinite-horizon convergence of the disturbance-aware policy, we adopt a slight abuse of notation by letting $t \to -\infty$ with a fixed final time $T$. This is a standard approach in LQR derivations; see \cite{bertsekas2012dynamic} for reference. Assumption \ref{ass:decay} is satisfied by most natural disturbance priors. For instance, the uniform distribution $\Pr = k / (T - t)$ decays to zero if $t \to -\infty$ faster than $k \to \infty$. We now introduce a lemma before presenting the main result of this section.

\begin{lemma} \label{lem:bounded}
    Under Assumption~\ref{ass:controllability}, we have that $\| \Ln \| \leq \hat r ~$ for all $k \in \Zp$ and $t \in \bio -\infty \isep T \eic$, where $\hat r$ is some scalar.
\end{lemma}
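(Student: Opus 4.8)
The plan is to reduce the vector recursion for $\Ln$ to a one-dimensional scalar contraction. First I would recognize the homogeneous part of the recursion as the transpose of the closed-loop matrix. Using $\F \Pp = B(R + B\T \Pp B)^{-1} B\T \Pp$ together with the feedback gain \eqref{eqn:feedback}, one has $\F \Pp A = B\K$, hence the identity
\[
    A\T (I - \F \Pp)\T = \big((I - \F \Pp) A\big)\T = (A - B\K)\T .
\]
The recurrence for $\Ln$ therefore collapses to $\Ln = (A - B\K)\T (\Lpt + \Pr \Pp \w)$, whose homogeneous factor is exactly the closed loop that Assumption~\ref{ass:controllability} renders stable with margin $\gamma$. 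Next I would exploit the convexity hidden in $\Lt$: since $\Prb = 1 - \Pr$ and $\Pr \in [0,1]$, the quantity $\Lpt = \Prpb \Lp + \Prp \Lp[k-1]$ is a convex combination, so $\|\Lpt\| \leq \max\{ \|\Lp\|, \|\Lp[k-1]\| \}$. Finally I would bound the inhomogeneous term uniformly: using $\Pr \leq 1$, $\|\w\| \leq \wb$, and the fact that the Riccati iterates $\Pp$ converge exponentially and are therefore uniformly bounded (Assumption~\ref{ass:controllability}), the forcing $\Pr (A-B\K)\T \Pp \w$ is bounded by a constant $\hat b$ independent of both $t$ and $k$.

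With these three ingredients, I would introduce $M_t \defeq \sup_{k \in \Zp} \|\Ln\|$ and show it obeys a scalar affine recursion. Combining the closed-loop bound, the convex-combination inequality, and the uniform forcing bound yields $\|\Ln\| \leq \beta \, M_{t+1} + \hat b$ for a contraction factor $\beta < 1$, and taking the supremum over $k$ gives $M_t \leq \beta \, M_{t+1} + \hat b$. Since the initialization $\Ln[T] = 0$ gives $M_T = 0$, unrolling this backward recursion produces $M_t \leq \hat b \sum_{j \geq 0} \beta^j = \hat b / (1 - \beta)$ uniformly in $t$, so the claim holds with $\hat r = \hat b/(1-\beta)$. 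Note that the infinite range of $k$ causes no difficulty: the estimate is uniform in $k$ from the outset, and finiteness of $M_t$ propagates through the recursion from the zero base case.

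The main obstacle is isolating the contraction factor $\beta$, which depends on how $\rho(\cdot)$ in Assumption~\ref{ass:controllability} is read. If $\rho$ denotes the spectral norm, then $\|(A-B\K)\T\| = \rho(A-B\K) < 1-\gamma$ directly, and I may take $\beta = 1-\gamma$ with nothing further to prove. If instead $\rho$ is the spectral radius, the per-step bound $\rho(A-B\K) < 1-\gamma$ does \emph{not} control the norm of products of the distinct, time-varying closed-loop matrices, because of possible transient growth; this is the delicate point. To handle it, I would use the exponential convergence of the Riccati sequence to the stabilizing solution: the closed loops converge to a limit $A_\infty$ with $\rho(A_\infty) < 1-\gamma$, so by Gelfand's formula there is a norm $\|\cdot\|_\star$ (equivalent to the Euclidean norm) with $\|A_\infty\|_\star \leq 1 - \gamma/2$, and hence $\|(A-B\K)\T\|_\star \leq 1 - \gamma/4 \eqdef \beta < 1$ for all but finitely many $t$. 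I would run the scalar recursion above in $\|\cdot\|_\star$, absorb the finitely many exceptional time steps into a bounded multiplicative constant, and convert the resulting uniform bound back to the Euclidean norm via the equivalence constants. This reuse of the Riccati convergence guaranteed by Assumption~\ref{ass:controllability} is precisely what upgrades the per-matrix stability margin into a genuine product-norm contraction.
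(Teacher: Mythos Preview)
Your proposal is correct and follows essentially the same approach as the paper: rewrite the recursion as $\Ln = (A-B\K)\T \Lpt + \Pr (A-B\K)\T \Pp \w$, bound $\|\Lpt\|$ by the worst case over $k$, and reduce to a scalar affine contraction $M_t \leq (1-\gamma) M_{t+1} + \hat b$ initialized at zero. Your extended discussion of the spectral-radius case is unnecessary here, since the paper's notation section explicitly defines $\rho(\cdot)$ as the spectral norm, so $\|(A-B\K)\T\| = \rho(A-B\K) < 1-\gamma$ holds directly and you may take $\beta = 1-\gamma$ without further argument.
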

\begin{proof}
    The recurrence for $\Ln$ can be written as
    \begin{align}
        \Ln = U_t \Lpt + \Pr H_t \w, \label{eqn:Lrecurrence}
    \end{align}
    with
    \begin{align*}
        U_t &= A\T (I - \F \Pp)\T = (A - B \K)\T, \\
        H_t &= A\T (I - \F \Pp)\T \Pp.
    \end{align*}
    Note that $U_t$ and $H_t$ depend only on $\P$. By Assumption~\ref{ass:controllability}, $\P$ converges to a positive definite solution and therefore $U_t \to U^*$ and $H_t \to H^*$ where $\|U_t\| < 1 - \gamma$ and $\|U^*\| < 1 - \gamma$.
    
    We now prove the claim by upper bounding a sequence of upper bounds given by $\hat r_t \geq \| \Ln \| ~ \forall k$. Since $r_{T}^k=0$ for all $k$, we have that $\hat r_T = 0$. By induction,
    \begin{align*}
        \| \Ln \| &\leq \| \Lpt \| \| U_t \| + \Pr \| \w \| \| H_t \| \\
                 &\leq \hat r_{t+1} (1 - \gamma) + \wb \hat H \\
                 &= \hat r_t,
    \end{align*}
    where $\hat r_t \defeq \hat r_{t+1} (1 - \gamma) + \wb \hat H$, $\wb \geq \| \w \|$ for all $k$ by assumption and $\hat H \geq \| H_t \|$ for all $t$ by convergence. Since $\gamma > 0$, $\hat{r_t}$ is a geometrically decaying sequence with a bounded additive term which cannot diverge. Hence, $\hat r_t < \hat r ~ \forall t$ for some $\hat r$ and therefore $\| \Ln \| < \hat r$ for all $k$ and $t$.
\end{proof}

\begin{figure}
    \includegraphics[width=\linewidth]{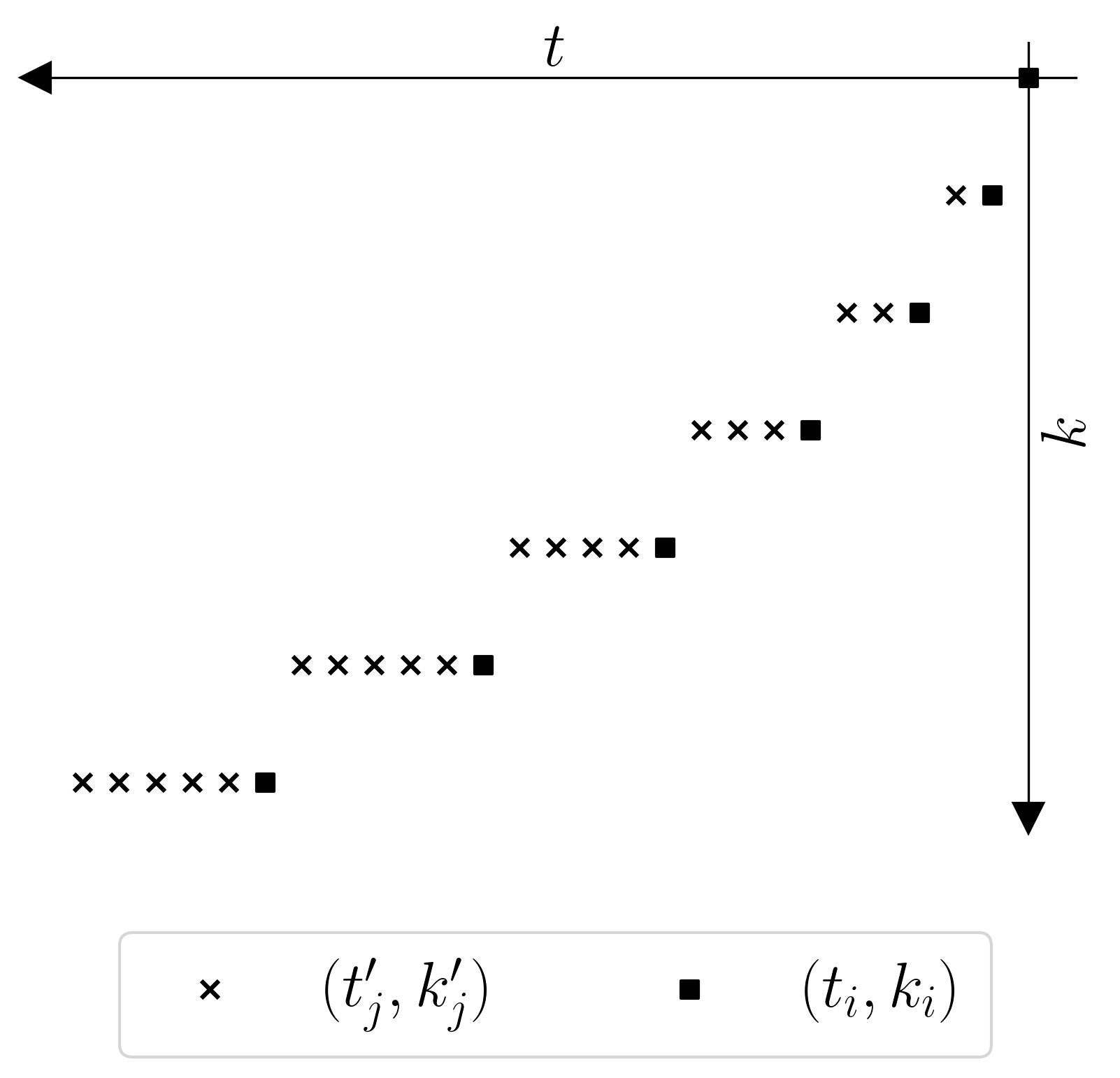}
    \caption{
        An illustration of a trajectory disturbance budget $k$ growing sublinearly in the time horizon. An increase in $k$ corresponds to a nonzero disturbance in the system dynamics. Lemma~\ref{lem:subseqconverge} shows that along the tuples $(t_i, k_i)$ demarcated by black squares, the norm of the linear cost coefficient $\| r_{t_i}^{k_i} \|$ vanishes. Theorem~\ref{thm:distawareconv} extends this to the intermediate values $(t_j', k_j')$, and establishes that the disturbance-aware policy converges to the blind online policy. 
    }
    \label{fig:distsublin}
\end{figure}

\begin{lemma} \label{lem:subseqconverge}
    Under Assumptions~\ref{ass:controllability}~and~\ref{ass:decay}, let $k_i / t_i \to 0$ as $i \to \infty$ for strictly monotonic sequences $k_i$ and $t_i$. Since only one disturbance can occur on a particular time step, we further assume $k_i - k_{i-1} \in \{0, 1\} ~ \forall i$. Then $\| \Li \| \to 0$ as $i \to \infty$.
\end{lemma}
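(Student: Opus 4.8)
The plan is to unroll the backward recurrence for $\Ln$ into an explicit sum over ``source'' contributions and then control each contribution either by the geometric decay of the closed-loop map or by the vanishing of the disturbance probability guaranteed by Assumption~\ref{ass:decay}. Substituting the definition $\Lpt = \Prpb \Lp + \Prp \Lp[k-1]$ into \eqref{eqn:Lrecurrence} gives the two-step recurrence
\[
    \Ln = U_t\big( \Prpb \Lp + \Prp \Lp[k-1] \big) + \Pr H_t \w,
\]
which couples level $k$ at time $t$ to levels $k$ and $k-1$ at time $t+1$. Expanding this repeatedly from the target index $(t_i, k_i)$ toward the terminal time $T$ (where $r_T^j = 0$) writes $\Li$ as a weighted sum of source terms $p_s^j H_s w_j$ over grid nodes $(s,j)$ with $t_i \le s \le T-1$ and $k_i - (s - t_i) \le j \le k_i$.

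Two structural observations make this sum tractable. First, every forward path from $(t_i, k_i)$ to time $s$ accumulates exactly the matrix product $\prod_{\tau = t_i}^{s-1} U_\tau$, regardless of the level trajectory, since each elapsed time step contributes precisely one factor $U_\tau$; by submultiplicativity and $\|U_\tau\| < 1 - \gamma$ (established in the proof of Lemma~\ref{lem:bounded}), this product has norm at most $(1-\gamma)^{s - t_i}$. Second, the scalar weights attached to the level choices are $\Prpb$ (stay) and $\Prp$ (drop), which sum to one; hence the total scalar weight $\beta_{s,j}$ collected on all paths reaching a fixed time $s$ satisfies $\sum_j \beta_{s,j} = 1$. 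Combining these with $\|H_s\| \le \hat H$ and $\|w_j\| \le \wb$ collapses the double sum to the one-dimensional bound
\[
    \| \Li \| \le \hat H \wb \sum_{s = t_i}^{T-1} (1-\gamma)^{s - t_i} \max_{k_i - (s - t_i) \le j \le k_i} p_s^j .
\]

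I would then split this sum at a window length $N$. For the tail $s - t_i > N$, I bound $\max_j p_s^j \le 1$ and sum the geometric series, obtaining a contribution at most $\hat H \wb\, (1-\gamma)^{N+1} / \gamma$, which is made smaller than $\epsilon/2$ by taking $N$ large. For the head $s - t_i \le N$, the reachable nodes have the form $(t_i + m, k_i - \delta)$ with $0 \le \delta \le m \le N$; for each fixed offset pair the shifted sequences $t_i + m$ and $k_i - \delta$ are again strictly decreasing and monotonically increasing, and satisfy $(k_i - \delta)/(t_i + m) \to 0$ because $t_i \to -\infty$ while $m, \delta$ stay constant, so Assumption~\ref{ass:decay} forces $p_{t_i + m}^{k_i - \delta} \to 0$. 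As there are only finitely many such offset pairs, the head is at most $\hat H \wb\, \gamma^{-1} \max_{m, \delta} p_{t_i + m}^{k_i - \delta} \to 0$ and eventually drops below $\epsilon/2$. Together these give $\|\Li\| < \epsilon$ for all large $i$, i.e. $\|\Li\| \to 0$.

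The main obstacle is the bookkeeping of the two-dimensional unrolling: one must verify that the matrix factor depends only on elapsed time and that the branching weights form a probability distribution over levels, so that the estimate reduces to the clean geometric sum above. The only other delicate point is confirming that the finitely many shifted index sequences still meet the hypotheses of Assumption~\ref{ass:decay}, which is where the integrality and strict monotonicity of $t_i$ (forcing $t_i \to -\infty$) are used.
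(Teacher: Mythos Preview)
Your argument is correct and reaches the conclusion by a genuinely different route than the paper. The paper does not fully unroll the recurrence to the terminal time; instead it bounds $\|r_{t_i+1}^{k_i}\|$ and $\|r_{t_i+1}^{k_i-1}\|$ separately (using the uniform bound $\hat r$ from Lemma~\ref{lem:bounded}) and assembles these into a one-step contractive recursion $\|r_{t_i}^{k_i}\| \le (1-\gamma)^{\bar t_i}\|r_{t_{i-1}}^{k_{i-1}}\| + o(1)$ along the subsequence. Your unrolling instead exploits two structural facts---that the matrix factor at each source depends only on elapsed time, and that the branching weights $(\bar p,p)$ form a probability distribution over levels---to collapse the two-dimensional expansion into a single geometric series, after which a head/tail split at a fixed window $N$ finishes the argument. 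This is more self-contained (it does not invoke the uniform bound $\hat r$ of Lemma~\ref{lem:bounded}), it makes the role of Assumption~\ref{ass:decay} more transparent, and in fact it never uses the hypothesis $k_i - k_{i-1}\in\{0,1\}$. The paper's approach, by contrast, yields an explicit contractive recurrence along the subsequence $(t_i,k_i)$, which is the natural precursor to the extension to intermediate indices carried out in Theorem~\ref{thm:distawareconv}.
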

\begin{proof}
    Let $\bar t_i = t_{i+1} - t_i$. Using \eqref{eqn:Lrecurrence} and expanding $\Lpti$ yields
    \begin{align}
        \Li = \Prib U_{t_i} \Lpi + \Pri U_{t_i} \Lpi[k_i-1] + \Pri H_{t_i} w_{k_i}. \label{eqn:Lexpansion}
    \end{align}
    We focus on bounding our two terms $\Lpi$ and $\Lpi[k_i-1]$. By inspection of \eqref{eqn:Lrecurrence},
    \begin{align*}
        \| \Lpi \| &\leq (1 - \gamma)^{\bar t_i - 1} \hat r + \sum_{j=1}^{\bar t_i-1} \Pre_{t_i + j}^{k_i} (1 - \gamma)^j (\hat r + \wb \hat H) \\
                   &\leq (1 - \gamma)^{\bar t_i - 1} \hat r + \hat \Pre_i \cdot \frac{\hat r + \wb \hat H}{\gamma},
    \end{align*}
    where $\wb \geq \| \w \|$ upper bounds the disturbance magnitude, $\hat H$ upper bounds $\| H_t \|$, $\hat r$ is as in Lemma~\ref{lem:bounded}, and
    \[
        \hat \Pre_i = \max_{j=1, \dots, \bar t_i - 1} \{\Pre_{t_i + j}^{k_i}\}.
    \]
    It is evident that under Assumption~\ref{ass:decay} with $k_i/t_i \to 0$ that $\hat \Pre_i \to 0$ as well. Note that we leverage the observation that $\| U_t \| < 1 - \gamma$ as in the proof for Lemma~\ref{lem:bounded}. We similarly bound
    \begin{align*}
        \| \Le_{t_i+1}^{k_i - 1} \| &\leq (1 - \gamma)^{\bar t_i - 1} \| \Le_{t_{i-1}}^{k_{i-1}} \|
            + \sum_{j=1}^{\bar t_i-1} \Pre_{t_i + j}^{k_i-1} (1 - \gamma)^j (\hat r + \wb \hat H) \\
                   &\leq (1 - \gamma)^{\bar t_i - 1} \| \Le_{t_{i-1}}^{k_{i-1}} \| + \hat \Pre_i' \cdot \frac{\hat r + \wb \hat H}{\gamma},
    \end{align*}
    where
    \[
        \hat \Pre_i' = \max_{j=1, \dots, \bar t_i - 1} \{\Pre_{t_i + j}^{k_i-1}\}.
    \] 
    Similarly to $\hat \Pre_i$, as $i \to \infty$ we have $\hat \Pre_i' \to 0$. We can now bound the expression for $\Li$ in \eqref{eqn:Lexpansion}:
    \[
        \| \Li \| \leq (1 - \gamma)^{\bar t_i} \| \Le_{t_{i-1}}^{k_{i-1}} \| + (\hat \Pre_i + \hat \Pre_i') \frac{\hat r + \wb \hat H}{\gamma} + \Pri \wb \hat H.
    \]
    We now have a recurrence for $\| \Li \|$ in terms of $ \| \Le_{t_{i-1}}^{k_{i-1}} \| $. Note that since $\hat \Pre_i$, $\hat \Pre_i'$, and $\Pri$ all decay to zero, this is an exponentially decaying recurrence with an additional term that also decays to zero, and hence $\| \Li \| \to 0$.
\end{proof}

Note that Lemma~\ref{lem:subseqconverge} only shows that the norm of the linear cost coefficient $\| \Li \| \to 0$ for a subsequence of the indices that would be observed for a full trajectory. We now need to include the intermediate time steps where a disturbance does not incur, as illustrated in Figure~\ref{fig:distsublin}. The following theorem makes this rigorous and presents the main result of this section.

\begin{theorem} \label{thm:distawareconv}
    Under the conditions of Lemma~\ref{lem:subseqconverge}, we relax strict monotonicity to include intermediate time steps and obtain our desired sequences of the form
    \[
        (t_j', k_j') := (t_0, k_0), (t_0-1, k_0), \dots, (t_{1}+1, k_0), (t_{1}, k_{1}), \dots
       %(k_j, t_j) = (k_i, t_i), (k_i, t_i-1), \dots, (k_i, t_{i+1}+1), (k_{i+1}, t_{i+1}), \dots
    \]
    where with some abuse of notation $(t_j', k_j')$ refers to the $j$th pair from the right hand side sequence, with $j \in \Zp$. If $k_i / t_i \to 0$ as $i \to \infty$, then $\Big\| \Lj \Big\| \to 0$ and the disturbance-aware policy converges to the blind online policy.
\end{theorem}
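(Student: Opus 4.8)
The plan is to propagate the corner-wise decay of Lemma~\ref{lem:subseqconverge} into the ``flat'' segments on which $k$ is held fixed, again exploiting the contraction $\| U_t \| < 1 - \gamma$ established in the proof of Lemma~\ref{lem:bounded}. By the construction of the sequence, the trajectory decomposes into segments at a fixed level $k_i$ spanning the indices $t_{i+1} < t \le t_i$, with the corner $(t_i, k_i)$ sitting at the top of its segment; Lemma~\ref{lem:subseqconverge} already supplies $\| \Li \| \to 0$ at these corners. It therefore remains to bound $\| \Ln \|$ at an arbitrary interior index $(t, k_i)$ in terms of the corner value.

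First I would unroll the recurrence \eqref{eqn:Lrecurrence} from $t_i$ down to $t$, using at each step the mixture relation $\Lt = \Prb \Ln + \Pr \Ln[k-1]$. This splits the result into three pieces: a homogeneous term carrying $\Li$ with coefficient bounded by $(1-\gamma)^{t_i - t}$ (using $\| U_t \| < 1 - \gamma$ and $\Prb \le 1$); a cross-level coupling term that accumulates the contributions $\Pr \Ln[k-1]$; and a disturbance-injection term accumulating $\Pr H_t \w$. The homogeneous term vanishes because $\| \Li \| \to 0$. The remaining two pieces are geometric sums of contractions weighted by disturbance probabilities, so by Lemma~\ref{lem:bounded} (which gives $\| \Ln[k-1] \| \le \hat r$) together with $\hat H \ge \| H_t \|$ and $\wb \ge \| \w \|$, each is controlled by $\hat \Pre_i \cdot (\hat r + \wb \hat H)/\gamma$, where $\hat \Pre_i$ is the largest level-$k_i$ disturbance probability over the segment. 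Exactly as in Lemma~\ref{lem:subseqconverge}, Assumption~\ref{ass:decay} forces $\hat \Pre_i \to 0$ as $i \to \infty$.

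Since the additive bound $\hat \Pre_i (\hat r + \wb \hat H)/\gamma$ is uniform over the segment, and the homogeneous term only shrinks for indices deeper in the segment, I obtain $\| \Ln \| \to 0$ uniformly over each segment as $i \to \infty$. Every trajectory index $(t_j', k_j')$ lies in some segment $i$ with $i \to \infty$ as $j \to \infty$, which yields $\| \Lj \| \to 0$. To finish, I would translate this into policy convergence: the disturbance-aware control \eqref{eqn:distawarecontrol} differs from the blind control $-\K x_t$ only by the correction $(R + B\T \Pp B)^{-1} B\T (\Pr \Pp \w + \Lpt)$. Here $\Lpt \to 0$ (since $\| \Ln \| \to 0$ along the trajectory, $\| \Ln[k-1] \|$ stays bounded by $\hat r$, and $\Pr \to 0$) and $\Pr \Pp \w \to 0$, while $(R + B\T \Pp B)^{-1} B\T$ remains bounded by Assumption~\ref{ass:controllability}; hence the correction vanishes and the two policies coincide in the limit.

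I expect the main obstacle to be establishing \emph{uniformity} over each segment rather than mere pointwise decay: one must verify that the supremum $\hat \Pre_i$ of the disturbance probability across a possibly long flat stretch vanishes, not merely its value at the corner $(t_i, k_i)$. The contraction $1 - \gamma$ absorbs the dependence on segment length in the homogeneous term, so the binding quantities are this uniform probability bound and the cross-level coupling to $\Ln[k-1]$, the latter being tamed precisely by the uniform bound $\hat r$ of Lemma~\ref{lem:bounded}.
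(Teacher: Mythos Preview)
Your proposal is correct and follows essentially the same route as the paper's own proof: propagate Lemma~\ref{lem:subseqconverge}'s corner decay along each flat level-$k_i$ segment via the contraction $\|U_t\|<1-\gamma$, bounding the residual by a geometric sum controlled by the segment-wide probability supremum $\hat p_i\to 0$ and the uniform bound $\hat r$ from Lemma~\ref{lem:bounded}, then read off policy convergence from \eqref{eqn:distawarecontrol}. Your treatment is in fact slightly more explicit than the paper's about the cross-level coupling term and about why $\Lpt\to 0$ (splitting it as $\Prb\Lp+\Pr\Lp[k-1]$ and handling the two summands separately), but the underlying mechanism is identical.
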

\begin{proof}
    Lemma~\ref{lem:subseqconverge} gives us that $\| \Li \| \to 0$. Consider an intermediate term $\| \Le_{t_i-t'}^{t_i} \|$ where $t' < \bar t_i$. Using a similar strategy and notation to the proof of Lemma~\ref{lem:subseqconverge}, we can note that
    \[
        \| \Le_{t_i-t'}^{t_i} \| \leq (1 - \gamma)^{t'} \| \Li \| + \hat \Pre_i \cdot \frac{\hat r + \wb \hat H}{\gamma} 
    \]
    Since $(1 - \gamma)^{t'} < 1$ and $\hat \Pre_i \to 0$, we can upper bound the norm of intermediate terms by the norm of the previous $\| \Li \|$ and an additional term that decays to zero as $k_i / t_i \to 0$. Therefore, we have that 
    \[
        \Big\| \Lj \Big\| \to 0 \textrm{ as } j \to \infty.
    \]
    This holds for an arbitrary sequence where $k_i / t_i \to 0$, and therefore it is easy to show also that
    \[
        \Big\| \tilde{\Le}_{t_j'}^{k_j'}  \Big\| \to 0 \textrm{ as } j \to \infty.
    \]
    Inspection of \eqref{eqn:distawarecontrol} and noting that $\Pri \to 0$ yields
    \[
        \Big(u_{t_j'}^{k_j'}\Big)^* \to -K_{t_j'} x_{t_j'},
    \]
    i.e. the disturbance-aware policy converges to the blind online policy for the sequence $(k_j', t_j')$, completing the proof.
\end{proof}

Theorem~\ref{thm:distawareconv} conveys that as long as the number of disturbances grows sublinearly in the time horizon, the disturbance-aware policy converges to the blind online policy. This matches the natural intuition that if disturbances become more sparse, without knowing their exact locations it becomes less cost-effective to mitigate their potential impact by deviating from the blind online policy.

\section{Blind Policy Regret} \label{sec:blindregret}
We establish a finite-horizon regret bound between the blind online policy and the optimal offline policy in the presence of sparse disturbances. This is accomplished by an intermediate reduction to the cost of the blind online policy without disturbances. Assumption~\ref{ass:decay} is not needed in this section, and the intermediate stability requirement in Assumption~\ref{ass:controllability} is required for Theorem~\ref{thm:optofflinebound} but not Theorem~\ref{thm:blindonlinebound}. We revert the ordering of disturbances $d_{t_k} = \w$ to the natural chronological order $t_1 \leq t_2 \leq \cdots \leq t_{|D|}$.  

We let $J_0^w(x_0)$ denote the cost-to-go under the blind policy for an initial state $x_0$ with disturbances $\{\w\}$ added at times $\{t_k\}$, $k \in \{0,\dots,|D|\}$. Similarly, $J_0(x_0)$ is the cost-to-go of the blind policy \textit{without} disturbances, i.e. $d_t=0 ~ \forall t$. The cost of the optimal offline policy with the disturbance sequence is denoted as $V_0^w(x_0)$ and given by \eqref{eqn:optofflinecost}. We will bound $J_0^w(x_0) - V_0^w(x_0)$ by a reduction to $J_0(x_0)$.

\begin{lemma} \label{lem:Pb}
    Under Assumption~\ref{ass:controllability}, the Riccati sequence $\P$ is bounded such that $\| \P \| \leq \Pb$ for all $t \in \bio -\infty \isep T \eic$ for some constant $\Pb \geq 1$.
\end{lemma}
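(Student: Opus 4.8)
The plan is to exploit the classical convergence theory for the discrete-time Riccati recursion, which Assumption~\ref{ass:controllability} is tailored to invoke. The quantity $x\T \P x$ is exactly the optimal LQR cost-to-go from state $x$ at time $t$ in the disturbance-free problem, so $\P \succeq 0$ for every $t$, and each individual iterate is finite because $R \succ 0$ guarantees that the inverse $(B\T \Pp B + R)^{-1}$ appearing in \eqref{eqn:riccati} always exists. Thus the only thing genuinely at issue is uniformity of the bound as $t \to -\infty$.

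First I would recall that controllability of $(A,B)$ together with observability of $(A,L)$, where $Q = L\T L$, ensures the discrete algebraic Riccati equation admits a unique stabilizing positive semidefinite solution $P^*$, and that the backward recursion \eqref{eqn:riccati} initialized at $\P[T] = Q_T \succeq 0$ converges to $P^*$; in fact this convergence is exponential \cite{kailath2000linear}, so there exist $C > 0$ and $\mu \in (0,1)$ with $\| \P - P^* \| \leq C \mu^{T - t}$ for all $t \leq T$. From this the bound is immediate: $\| \P \| \leq \| P^* \| + C \mu^{T-t} \leq \| P^* \| + C$ for every $t \in \bio -\infty \isep T \eic$, since $\mu^{T-t} \leq 1$ whenever $t \leq T$. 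Setting $\Pb \defeq \max\{1, \| P^* \| + C\}$ then yields $\| \P \| \leq \Pb$ with $\Pb \geq 1$, as required. Equivalently, without quoting the exponential rate, one may simply observe that $\P \to P^*$ makes $\{\P\}_{t \leq T}$ a convergent sequence whose every term is finite, hence bounded, and then inflate the resulting supremum to be at least $1$.

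The only real obstacle is establishing the uniform-in-$t$ control over the infinite index range, and this is exactly what the convergence result supplies; the pointwise stability $\rho(A - B\K) < 1 - \gamma$ in Assumption~\ref{ass:controllability} is what places us in the regime where that convergence holds. As a more self-contained alternative I could instead bound the cost-to-go directly using the Lyapunov-type identity $\P = (A - B\K)\T \Pp (A - B\K) + Q + \K\T R \K$, writing $x\T \P x$ as the telescoped sum $x_T\T Q_T x_T + \sum_{s=t}^{T-1} x_s\T(Q + K_s\T R K_s) x_s$ along the closed-loop trajectory $x_{s+1} = (A - B K_s)x_s$ with $x_t = x$. The summand decays geometrically once the feedback gains $K_s$ have converged, giving a uniform bound on $\sum_s \|x_s\|^2$ and hence on $\|\P\|$. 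I expect the subtle point in this alternative to be arguing geometric decay of the time-varying closed-loop product $\prod_s (A - B K_s)$ from the pointwise spectral-radius bound, which requires leaning on the convergence $K_s \to K^*$ rather than on the pointwise bound alone.
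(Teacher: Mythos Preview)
Your proposal is correct and follows essentially the same approach as the paper: invoke the classical convergence of the Riccati recursion under the controllability/observability hypotheses of Assumption~\ref{ass:controllability}, note that a convergent sequence is bounded, and then take the maximum with $1$ to force $\Pb \geq 1$. The paper's proof is simply your ``equivalently'' sentence verbatim; your additional remarks on the exponential rate and the Lyapunov-type alternative are sound elaborations but not needed for the result.
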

\begin{proof}
    Under the controllability and observability conditions in Assumption~\ref{ass:controllability}, it is well known that the Ricatti sequence $\P$ converges as $t \to -\infty$ \cite{bertsekas2012dynamic}. Convergence of the matrix sequence implies convergence of the norm, and convergent sequences are bounded. If this bound is less than $1$, we choose $\Pb = 1$ to ensure that $\Pb \geq 1$.
\end{proof}

\begin{remark}
    Note that the previous lemma seems stronger than necessary, as in this section we are operating in the finite-horizon domain with $t \in \bic 0 \isep T \eic$. However, Lemma~\ref{lem:Pb} is important for establishing that the bound $\Pb$ is independent of the time horizon of the system.
\end{remark}

\begin{theorem} \label{thm:blindregret}
    Consider the evolution of a trajectory starting from $x_0$ which is perturbed by $|D|$ disturbances of magnitude at most $\wb$. The regret of the blind online policy compared to the optimal offline policy is bounded by
    \begin{align*}
    |J^w_0(x_0) - V_0^w(x_0)| &\leq O(|D| \wb \|x_0\|) + O(|D|^2 \wb^2),
    \end{align*}
    where $O(\cdot)$ absorbs constants dependent on the dynamics matrices $A$, $B$ and design matrices $Q$, $Q_T$, $R$.
\end{theorem}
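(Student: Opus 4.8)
The plan is to use the disturbance-free blind cost $J_0(x_0)$ as a pivot and split the regret with the triangle inequality,
\begin{align*}
    |J_0^w(x_0) - V_0^w(x_0)| \leq |J_0^w(x_0) - J_0(x_0)| + |J_0(x_0) - V_0^w(x_0)|.
\end{align*}
I would bound the two pieces separately, as the intermediate Theorems~\ref{thm:blindonlinebound} and~\ref{thm:optofflinebound}. The key structural fact is that both $J_0^w$ and $V_0^w$ differ from the common disturbance-free value $x_0\T \P[0] x_0$ only through terms that are driven exclusively by the $|D|$ disturbances, so each piece reduces to bounding an affine-in-$x_0$ expression whose coefficients are controlled by $\wb$, $\Pb$, and $|D|$.

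For the blind piece I would exploit the Riccati identity $x\T \P x = x\T(Q + \K\T R \K)x + ((A - B\K)x)\T \Pp ((A - B\K)x)$. Substituting $(A - B\K)x_t = x_{t+1} - d_t$ into each stage cost $x_t\T Q x_t + u_t\T R u_t$ and summing over $t$ telescopes the quadratic $\P$-terms, leaving
\begin{align*}
    J_0^w(x_0) - J_0(x_0) = \sum_{k=1}^{|D|} \left( 2 \w\T P_{t_k+1} x_{t_k+1} - \w\T P_{t_k+1} \w \right),
\end{align*}
a sum over only the $|D|$ disturbance times. Each summand is controlled using Cauchy--Schwarz in the $P_{t_k+1}$-weighted inner product together with $\| \P \| \leq \Pb$ from Lemma~\ref{lem:Pb}. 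The post-disturbance states are bounded in the same weighted norm by observing that the identity makes the closed-loop map nonexpansive, $\|(A - B\K)x\|_{\Pp} \leq \|x\|_{\P}$, so the weighted norm is nonincreasing between disturbances and jumps by at most $\Pb^{1/2}\wb$ at each of them; hence $\|x_{t_k+1}\|_{P_{t_k+1}} \leq \Pb^{1/2}(\|x_0\| + |D|\wb)$. Collecting terms gives $|J_0^w(x_0) - J_0(x_0)| = O(|D|\wb\|x_0\|) + O(|D|^2\wb^2)$. Crucially this argument uses only boundedness and nonexpansiveness, so the spectral-radius part of Assumption~\ref{ass:controllability} is not invoked.

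For the offline piece I would use the closed form directly. With $d_t = 0$ the recurrences \eqref{eqn:optofflinevector} and \eqref{eqn:optofflinescalar} give $v_t = 0$ and $q_t = 0$, so the disturbance-free offline cost equals $J_0(x_0) = x_0\T \P[0] x_0$ and $|J_0(x_0) - V_0^w(x_0)| = |v_0\T x_0 + q_0| \leq \|v_0\|\|x_0\| + |q_0|$. Unrolling \eqref{eqn:optofflinevector}, $v_0$ is a sum of $|D|$ impulse responses propagated by $A\T S_t \Pp^{-1} = (A - B\K)\T$, which is now a strict contraction by the spectral bound in Assumption~\ref{ass:controllability}; this yields $\|v_0\| = O(|D|\wb)$ and is exactly where intermediate stability enters. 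Feeding this into \eqref{eqn:optofflinescalar}, the dominant contributions to $q_0$ are the $|D|$ quadratic terms $d_t\T S_{t+1} d_t = O(\wb^2)$ and the cross terms pairing the $O(|D|\wb)$ vector $v_{t+1}$ against the $O(\wb)$ disturbances over $|D|$ disturbance times, giving $|q_0| = O(|D|^2\wb^2)$. Combining the two pieces through the triangle inequality and absorbing the dynamics- and design-matrix constants into $O(\cdot)$ produces the stated bound.

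The main obstacle is keeping every bound independent of the horizon $T$, since the linear coefficients are accumulated by applying the closed-loop map over arbitrarily long gaps between disturbances, and a naive estimate scales with $T$. For the offline piece this is resolved by the strict geometric contraction of $(A - B\K)\T$, which collapses each per-disturbance contribution to a constant and replaces the potential $T$-dependence with the $|D|$ count. For the blind piece the subtlety is achieving the same $T$-independence using only the nonexpansiveness supplied by the Riccati identity rather than a strict contraction; the telescoping identity is what makes this possible, reducing the entire difference to a sum over the disturbance times and confining state growth to $|D|$ bounded jumps in the $\P$-weighted norm. The secondary care point is the bookkeeping that yields the $|D|^2$ factor, which arises precisely from pairing the $O(|D|\wb)$ linear coefficient against each of the $|D|$ disturbances.
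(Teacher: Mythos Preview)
Your approach is essentially the paper's: the same triangle-inequality pivot through $J_0(x_0)$, the same telescoping argument for the blind piece using the $\P$-weighted norm and the nonexpansiveness implied by the Riccati identity (the paper phrases this via the recurrence $b_{k+1}=(\sqrt{b_k}+\wb\Pb^{3/2})^2$, but it is the same idea), and the same bounding of $v_0$ and $q_0$ for the offline piece using $A\T S_t \Pp^{-1}=(A-B\K)\T$ and Assumption~\ref{ass:controllability}.

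There is one genuine gap in your enumeration for $|q_0|$. The recurrence \eqref{eqn:optofflinescalar} has \emph{three} contributions, and you only list two. The third term
\[
-\tfrac{1}{4}\, v_{t+1}\T B(B\T \Pp B + R)^{-1} B\T v_{t+1}
\]
is present at \emph{every} time step $t\in\bic 0\isep T\eic$, not only at disturbance times, so summing it naively with your bound $\|v_{t+1}\|=O(|D|\wb)$ gives $O(T|D|^2\wb^2)$, which depends on $T$. Your final paragraph acknowledges the $T$-independence obstacle but does not say how this particular sum is controlled. The paper handles it by unrolling $v_{t+1}$, swapping the order of the resulting double sum, and using the strict contraction $\|A\T S_{t'}\P[t'+1]^{-1}\|\leq 1-\gamma$ to obtain $\sum_{t=0}^{T}\|v_{t+1}\|\leq O(|D|/\gamma)$, which then yields the horizon-free $O(|D|^2\wb^2)$ bound on the third term. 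You should make this step explicit; otherwise the offline half of your argument does not close.
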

\begin{proof}
    Using the triangle inequality
    \begin{align*}
        |J^w_0(x_0) - V_0^w(x_0)| \leq{}&|J^w_0(x_0) - J_0(x_0)| + |J_0(x_0) - V_0^w(x_0)|.
    \end{align*}
We can now apply Theorem~\ref{thm:blindonlinebound} and Theorem~\ref{thm:optofflinebound} to derive 
    \begin{align*}
        |J^w_0(x_0) - V_0^w(x_0)| &\leq 2 |D| \wb \Pb \left(2 \|x_0\| + \wb \right) \\
                                  &{}+ |D|^2 \wb^2 (2\Pb^2 + 3 \Pb) \\
              &{}+ \frac{|D|^2 \wb \Pb (3 \Pb \wb + \gamma^{-2}) \|B\|^2}{\eigmin{R}},
    \end{align*}
    where $\gamma$ is as in Assumption~\ref{ass:controllability}, $R$ is the control cost matrix in \eqref{eqn:cost}, and $\Pb$ is from Lemma~\ref{lem:Pb}. This yields the desired expression as $\gamma$ and $\Pb$ are given by $A$, $B$, $Q$, $Q_T$, and $R$.
\end{proof}

The main utility of Theorem~\ref{thm:blindregret} is that the asyptotic suboptimality of the blind online policy can be upper bounded by a quadratic function of $|D|$ and $\wb$. While the quadratic nature of the cost \eqref{eqn:cost} hints at the result, this is nontrivial as the optimal offline policy is counterfactual in that it assumes complete knowledge of all future disturbances. Note that the derived bound has no dependence on the time horizon $T$. We derive the result by reducing to the cost $J_0(x_0)$ of the blind online policy with no disturbances, and in doing so establish Theorems~\ref{thm:blindonlinebound} and \ref{thm:optofflinebound}, which are also interesting standalone results.

\subsection{Blind policy additional cost bound}
Consider the Riccati sequence $\P[0], \dots, \P[T]$, with corresponding feedback matrices $\K$ as in \eqref{eqn:feedback}. The cost-to-go of an initial state $x_0$ under the blind policy with no disturbances is $J_0(x_0) = x_0\T \P[0] x_0$. We will bound the additional cost incurred by $|D|$ disturbances of magnitude at most $\wb$ under the blind policy $u_t = -\K x_t$.

\begin{theorem} \label{thm:blindonlinebound}
    Consider the blind online LQR policy for a controllable and observable system which is perturbed by $|D|$ disturbances of bounded norm $\wb$ over a finite time horizon. The incurred cost difference compared to the unperturbed trajectory starting at the same initial state $x_0$ is bounded by
    \begin{align*}
    |J^w_0(x_0) - J_0(x_0)| \leq 2 |D| \wb \Pb \left(\|x_0\| + \wb + |D| \wb \Pb \right),
    \end{align*}
    where $\Pb$ is from Lemma~\ref{lem:Pb}.
\end{theorem}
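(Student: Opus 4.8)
The plan is to obtain an \emph{exact} expression for the cost gap $J_0^w(x_0) - J_0(x_0)$ by a completion-of-squares telescoping argument, and then to control the resulting disturbance terms using a $\P$-weighted norm rather than the raw state norm. Working in the weighted norm is precisely what lets the argument proceed without the intermediate stability bound of Assumption~\ref{ass:controllability}, relying only on Lemma~\ref{lem:Pb}.

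First I would record the algebraic consequence of the Riccati recursion \eqref{eqn:riccati} and feedback form \eqref{eqn:feedback}, namely
\begin{align*}
    \P = (A - B\K)\T \Pp (A - B\K) + Q + \K\T R \K,
\end{align*}
so that along the blind closed loop $x_{t+1} = (A - B\K) x_t + d_t$ the stage cost $x_t\T Q x_t + u_t\T R u_t$ equals $x_t\T \P x_t - x_t\T (A - B\K)\T \Pp (A - B\K) x_t$. Substituting $x_t\T (A-B\K)\T \Pp (A - B\K) x_t = x_{t+1}\T \Pp x_{t+1} - 2 d_t\T \Pp (A-B\K) x_t - d_t\T \Pp d_t$ and summing over $t$, the $x_t\T \P x_t$ terms telescope against the terminal cost $\P[T] = Q_T$, leaving
\begin{align*}
    J_0^w(x_0) - J_0(x_0) = \sum_{k} \Big[ 2 \w\T P_{t_k+1} (A - B K_{t_k}) x_{t_k} + \w\T P_{t_k+1} \w \Big],
\end{align*}
where the sum runs over the disturbance times and $J_0(x_0) = x_0\T \P[0] x_0$.

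Next I would bound each summand. Writing the cross term in the $P_{t_k+1}^{1/2}$-weighted inner product and applying Cauchy--Schwarz gives $|2\w\T P_{t_k+1}(A - BK_{t_k}) x_{t_k}| \le 2 \| P_{t_k+1}^{1/2} \w \| \, \| P_{t_k+1}^{1/2} (A - BK_{t_k}) x_{t_k} \|$. The crucial inequality is that the identity above implies $(A - B\K)\T \Pp (A - B\K) \preceq \P$ (since $Q + \K\T R \K \succeq 0$), hence $\| P_{t_k+1}^{1/2} (A - BK_{t_k}) x_{t_k} \| \le \| P_{t_k}^{1/2} x_{t_k} \|$; this replaces any spectral bound on $A - B\K$. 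Defining $\phi_t \defeq \| \P^{1/2} x_t \|$, the same weighted estimate yields the one-step recursion $\phi_{t+1} \le \phi_t + \sqrt{\Pb}\, \| d_t \|$, which iterates to $\phi_{t_k} \le \sqrt{\Pb}\, (\|x_0\| + |D| \wb)$, using that at most $|D|$ disturbances precede $t_k$ and $\phi_0 \le \sqrt{\Pb}\, \|x_0\|$ via Lemma~\ref{lem:Pb}. Combining with $\| P_{t_k+1}^{1/2} \w \| \le \sqrt{\Pb}\, \wb$ and $\w\T P_{t_k+1} \w \le \Pb \wb^2$, each summand is at most $2 \Pb \wb (\|x_0\| + |D| \wb) + \Pb \wb^2$.

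Summing over the $|D|$ disturbances and collecting terms gives a bound that is dominated termwise by the claimed $2|D|\wb \Pb(\|x_0\| + \wb + |D|\wb\Pb)$, where $\Pb \ge 1$ absorbs both the factor-of-two discrepancy in the $\wb^2$ term and the $\Pb$-versus-$\Pb^2$ gap in the quadratic-in-$|D|$ term. I expect the main obstacle to be the state-norm control: a direct bound on $\|x_{t_k}\|$ along the perturbed trajectory would require the stability hypothesis, so the key idea is to propagate the Lyapunov-like quantity $\phi_t = \|\P^{1/2} x_t\|$ and exploit the Riccati monotonicity $(A-B\K)\T \Pp (A - B\K) \preceq \P$, which renders the homogeneous dynamics nonexpansive in this weighted norm irrespective of stability.
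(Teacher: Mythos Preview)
Your proposal is correct and follows essentially the same route as the paper: both derive the identical telescoping identity (the paper's $y_{t_k'} = x_{t_k'} - w_k$ is exactly your $(A-BK_{t_k})x_{t_k}$) and then control the cross terms via the $P$-weighted norm of the state, using the cost-to-go monotonicity $(A-B\K)\T \Pp (A-B\K) \preceq \P$ rather than any spectral stability of $A-B\K$. Your execution is slightly cleaner---the paper sets up a nonlinear recurrence $b_{k+1} = b_k + 2\wb\Pb^{3/2}\sqrt{b_k} + \wb^2\Pb$ and uses $\Pb \ge 1$ to force it into a perfect square, whereas your triangle-inequality recursion $\phi_{t+1} \le \phi_t + \sqrt{\Pb}\,\|d_t\|$ is linear from the start and gives a marginally tighter bound that is still dominated termwise by the stated inequality.
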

\begin{proof}
    Consider the perturbed trajectory $x_t$, with disturbances $w_1, \dots, w_{|K|}$ being added at time steps $t_1 \leq \dots \leq t_{|K|}$. For notational simplicity, we let $t_k' = t_k+1$ and $y_{t_k'} = x_{t_k'} - \w$ with $y_0=x_0$; i.e., $y_{t_k'}$ represents what the state would have been at time step $t_k+1$ without the disturbance $\w$ at the previous time step. Using the cost-to-go, we can write the additional cost of the perturbed trajectory as
\begin{align}
    |J^w(x_0) - J(x_0)| \nonumber &= \Big| \sum_{t_k \in D} \left[ -\ykp\T \Pkp \ykp + \xkp\T \Pkp \xkp \right] \Big|  \nonumber \\
                                &= \Big| \sum_{t_k \in D} \left[ 2 \ykp\T \Pkp \w + \w\T \Pkp \w \right] \Big| \nonumber \\
                                &\leq 2 |D| \wb^2 \Pb + 2 \wb\sum_{t_k \in D} \| \ykp\T \Pkp \|. \label{eqn:blindbound1}
\end{align}
We now wish to upper bound $\| \ykp\T \Pkp \|$. We first establish upper bounds $ \ykp\T \Pkp \ykp \leq b_k$ for a recurrence $b_k$. Proceeding by induction, let $b_0 = x_0\T \P[0] x_0$. The induction hypothesis states
\[
    \ykp\T \Pkp \ykp \leq b_k,
\]
which also implies that
\begin{align*}
    \| \ykp\T \Pkp \| &\leq \sqrt{\Pb} \cdot \sqrt{\ykp\T \Pkp \ykp} = \sqrt{\Pb b_k}.
\end{align*}
Therefore, with some manipulation we have that
\[
    \xkp\T \Pkp \xkp = (\ykp + \w)\T \Pkp (\ykp + \w) \leq b_k + 2 \wb \Pb^{3/2} \sqrt{b_k} + \wb^2 \Pb.
\]
Since no disturbances act between $\xkp$ and $y_{t_{k+1}'}$, by optimal control the cost-to-go must decrease and we have
\begin{align*}
    y_{t_{k+1}'}\T \P[t_{k+1}'] y_{t_{k+1}'} \leq \xkp\T \Pkp \xkp \leq b_k + 2 \wb \Pb^{3/2} \sqrt{b_k} + \wb^2 \Pb \defeq b_{k+1}.
\end{align*}
This is a nonlinear recurrence, for which closed-form solutions are difficult to find. We make the simplification $\wb^2 \Pb \leq \wb^2 \Pb^3$ since $\Pb \geq 1$ by Lemma~\ref{lem:Pb}; this is a small concession to our bound since $\Pb$ is a fixed, precomputed quantity. The recurrence now reads
\begin{align*}
    b_{k+1} &= b_k + 2 \sqrt{\wb^2 \Pb^3 b_k} + \wb^2 \Pb^3 \\
            &= \left(\sqrt{b_k} + \sqrt{\wb^2 \Pb^3}\right)^2.
\end{align*}
Square rooting both sides yields
\begin{align*}
    \sqrt{b_{k}} = \sqrt{b_0} + k \sqrt{\wb^2 \Pb^3}.
\end{align*}
We can now bound our original term of interest
\begin{align*}
    \| \ykp\T \Pkp \| \leq \sqrt{\Pb} \sqrt{b_k} \leq \sqrt{\Pb} \cdot \left(\sqrt{x_0\T \P[0] x_0} + k \sqrt{\wb^2 \Pb^3}\right).
\end{align*}
Noting that this expression is maximized by $k=|D|$ and substituting into \eqref{eqn:blindbound1} completes the proof.
\end{proof}

Theorem~\ref{thm:blindonlinebound} yields an easily-computable bound for the maximum incurrable difference between the cost of a system under nominal, no-disturbance operation and the cost with the inclusion of sparse, unexpected disturbances. The bound scales quadratically in both $|D|$ and $\wb$, while the dependence on the initial state magnitude $\|x_0\|$ is merely linear.

\subsection{Optimal offline policy cost bound}
Here we desire to bound the difference between the cost of the blind policy without disturbances to the optimal online policy with disturbances.

\begin{theorem} \label{thm:optofflinebound} 
    Under Assumption~\ref{ass:controllability}, consider $|D|$ disturbances over a finite time horizon, each with norm bounded by $\wb$. The cost difference between the optimal online policy without disturbances and the optimal offline policy with disturbances is bounded by
    \begin{align*}
        &|J_0(x_0) - V_0^w(x_0)| \\
        \leq{}&2 \|x_0\| |D| \Pb \wb + |D|^2 \wb \Pb \left( 3\wb + \frac{(3 \Pb \wb + \gamma^{-2}) \|B\|^2}{\eigmin{R}} \right),
    \end{align*}
    where $\gamma$ is as in Assumption~\ref{ass:controllability}, $R$ is the control cost matrix in \eqref{eqn:cost}, and $\Pb$ is from Lemma~\ref{lem:Pb}.
\end{theorem}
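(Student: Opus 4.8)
The plan is to recognize that the two costs differ only through the lower-order terms of the offline value function. Since $J_0(x_0) = x_0\T \P[0] x_0$ is exactly the quadratic part of $V_0^w(x_0) = x_0\T \P[0] x_0 + v_0\T x_0 + q_0$ from \eqref{eqn:optofflinecost}, the difference is $J_0(x_0) - V_0^w(x_0) = -(v_0\T x_0 + q_0)$, so by Cauchy--Schwarz and the triangle inequality $|J_0(x_0) - V_0^w(x_0)| \leq \|v_0\| \, \|x_0\| + |q_0|$. The theorem therefore reduces to bounding $\|v_0\|$ and $|q_0|$ by quantities independent of the horizon $T$, and the three stated groups of terms will arise respectively from $\|v_0\| \, \|x_0\|$, from the disturbance-supported parts of $q_0$, and from the remaining part of $q_0$.

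First I would bound $\|v_0\|$. Rewriting \eqref{eqn:optofflinevector} with the identity $A\T S_t \Pp^{-1} = (A - B\K)\T = U_t$ (the same $U_t$ as in Lemma~\ref{lem:bounded}, verified from \eqref{eqn:optofflineS} and \eqref{eqn:feedback}), the recurrence becomes $v_t = U_t v_{t+1} + 2 A\T S_t d_t$. Unrolling from $v_T = 0$ expresses $v_t$ as a sum over future disturbance times of products of the $U$'s applied to $2 A\T S_s d_s$; since $\|U_t\| < 1 - \gamma$ by Assumption~\ref{ass:controllability}, $\|\Pp\| \leq \Pb$ by Lemma~\ref{lem:Pb}, and $d_s$ is supported on the $|D|$ disturbance times with $\|d_s\| \leq \wb$, each impulse contributes at most $2\Pb\wb$ while the contraction factors only shrink it, giving $\|v_t\| \leq 2|D|\Pb\wb$ uniformly in $t$. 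This produces the first term $2\|x_0\| |D| \Pb \wb$.

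Next I would telescope \eqref{eqn:optofflinescalar} from $q_T = 0$ to write $q_0$ as a sum over $t$ of its three increments. The first two increments carry an explicit factor $d_t$, so those sums range only over the $|D|$ disturbance times; using the positive semidefinite ordering $0 \preceq S_t \preceq \Pp$ (hence $\|S_t\| \leq \Pb$), the estimate $\|(R + B\T \Pp B)^{-1}\| \leq \eigmin{R}^{-1}$, the uniform bound $\|v_{t+1}\| \leq 2|D|\Pb\wb$, and $\|\Pp^{-1} S_t\| \leq 1 + \Pb \|B\|^2 / \eigmin{R}$, these two contributions are $O(|D|^2 \wb^2 \Pb)$ and $O(|D|^2 \wb^2 \Pb^2 \|B\|^2 / \eigmin{R})$.

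The main obstacle is the third increment $-\tfrac14 v_{t+1}\T B (R + B\T \Pp B)^{-1} B\T v_{t+1}$: unlike the first two it has no $d_t$ factor, and $v_{t+1}$ is nonzero at \emph{every} step preceding the last disturbance, so bounding $\tfrac14 \sum_t \|v_{t+1}\|^2 \|B\|^2 / \eigmin{R}$ term by term would scale with $T$ and destroy horizon independence. The fix is to exploit the geometric decay of $\|v_t\|$ forced by $\|U_t\| < 1 - \gamma$: swapping the order of summation in the impulse-response representation above and summing the geometric series gives $\sum_t \|v_t\| \leq 2|D|\Pb\wb\,\gamma^{-1}$, and combining this with $\sum_t \|v_t\|^2 \leq \big( \sum_t \|v_t\| \big)^2$ converts the length-$T$ sum into the $\gamma^{-2} \|B\|^2 / \eigmin{R}$ contribution. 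Collecting the three groups and using $\Pb \geq 1$ and $|D| \geq 1$ to absorb lower-order pieces yields the stated bound; every step other than this geometric-decay argument is routine bookkeeping with submultiplicativity, Cauchy--Schwarz, and the ordering $0 \preceq S_t \preceq \Pp$.
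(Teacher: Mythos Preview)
Your proposal is correct and follows essentially the same route as the paper: reduce to $\|v_0\|\,\|x_0\| + |q_0|$, bound $\|v_t\|$ uniformly via the contraction $U_t = A\T S_t \Pp^{-1} = (A-B\K)\T$, split $q_0$ into its three increments, and resolve the horizon-dependent third sum by the same trick $\sum_t \|v_{t+1}\|^2 \leq \big(\sum_t \|v_{t+1}\|\big)^2$ together with the swapped-order geometric series $\sum_t \|v_t\| \leq 2|D|\Pb\wb\,\gamma^{-1}$. The only cosmetic deviation is that you bound $\|S_t\|$ via the ordering $0 \preceq S_t \preceq \Pp$, whereas the paper expands $S_t$ and picks up the extra factor $1 + \Pb\|B\|^2/\eigmin{R}$; this is a slight tightening on your side but does not change the structure of the argument.
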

\begin{proof}
    Since $J_0(x_0) = x_0\T \P[0] x_0$, we have
    \begin{align}
        |J_0(x_0) - V_0^w(x_0)| \leq |v_0\T x_0| + |q_0|. \label{eqn:optofflinestart}
    \end{align}
    We will bound each of these terms individually.

    \textit{Bounding} $|v_0\T x_0|$: The structure of \eqref{eqn:optofflinevector} yields
    \begin{align*}
        v_0 = \sum_{t_k \in D} \Big( \prod_{t=0}^{t_k-1} A\T S_t \Pp^{-1} \Big) 2 A\T S_{t_k} w_k.
    \end{align*}
    Note that $\Pp^{-1} S_t A = A - B \K$, therefore it follows from Assumption~\ref{ass:controllability} that $\rho(A\T S_t \Pp^{-1}) < 1 - \gamma ~ \forall t$. We now have
    \begin{align}
        |v_0\T x_0| &\leq \|x_0\| \cdot 1 \cdot \sum_{t_k \in D} \|2 A\T S_{t_k} w_k\| \nonumber \\
                  &= 2\|x_0\| \sum_{t_k \in D} \|A\T S_{t_k} \P[t_k+1]^{-1}\| \| \P[t_k+1] w_k\| \nonumber \\
                  &\leq 2 \|x_0\| |D| \Pb \wb \label{eqn:v0x0bound}.
    \end{align} 

    \textit{Bounding} $|q_0|$: The structure of the recurrence \eqref{eqn:optofflinescalar} and the observation that $d_t=0$ if $t \not \in D$ yields
    \begin{subequations}
    \begin{align}
        |q_0| \leq &\sum_{t_k \in D} \| \w\T S_{t_k+1} \w \| \label{eqn:q0bound1} \\
        + &\sum_{t_k \in D} \| v_{t_k+1}\T \P[t_k+1]^{-1} S_{t_k} \w \| \label{eqn:q0bound2} \\
        + &\sum_{t=0}^T \sfrac{1}{4} ~ \|v_{t+1}\|^2 \|B(B\T \Pp B + R)^{-1}B\T\|. \label{eqn:q0bound3}
    \end{align}
    \end{subequations}
    We will bound \eqref{eqn:q0bound1}, \eqref{eqn:q0bound2}, and \eqref{eqn:q0bound3} individually. Regarding \eqref{eqn:q0bound1}, we can write:
    \begin{align*}
        \sum_{t_k \in D} \| \w\T S_{t_k+1} \w \| &\leq |D| \wb^2 \| S_t \| \\
                                                 &\leq |D| \wb^2 \Pb \|I - B(B\T \Pp B + R)^{-1} B\T \Pp\| \\
                                                 &\leq |D| \wb^2 \Pb \left( 1 + \|B\|^2 \Pb \|(B\T \Pp B + R)^{-1}\| \right) \\
                                                 &\leq |D| \wb^2 \Pb \left( 1 + \frac{\|B\|^2 \Pb}{\eigmin{R}} \right).
    \end{align*}
    The last step follows from $R \succ 0$ and $B\T \Pp B \succeq 0$. We can now bound \eqref{eqn:q0bound2}:
    \begin{align*}
        \sum_{t_k \in D} \| v_{t_k+1}\T \P[t_k+1]^{-1} S_{t_k} \w \|  &\leq \sum_{t_k \in D} \| v_{t_k+1} \| \wb \left( 1 + \frac{\|B\|^2 \Pb}{\eigmin{R}} \right) \\
                                                                      &\leq 2 |D|^2 \wb^2 \Pb \left( 1 + \frac{\|B\|^2 \Pb}{\eigmin{R}} \right).
    \end{align*}
    By noting that the previous $\|v_0\|$ bound also holds similarly for any $\|v_t\|$. Finally, \eqref{eqn:q0bound3} can by manipulated as:
    \begin{align*}
        &\sum_{t=0}^T \|v_{t+1}\|^2 \|B(B\T \Pp B + R)^{-1}B\T\| \\
        \leq{}&\frac{\|B\|^2}{\eigmin{R}} \sum_{t=0}^T \|v_{t+1}\|^2 \\
        \leq{}&\frac{\|B\|^2}{\eigmin{R}} \Bigg( \sum_{t=0}^T \|v_{t+1}\| \Bigg)^2 \\
        \leq{}&\frac{\|B\|^2}{\eigmin{R}} \Bigg( \sum_{t=0}^T ~ \sum_{\substack{t_k \in D \\ t_k>t}} \Big\| \Big( \prod_{t'=t}^{t_k-1} A\T S_{t'} \P[t'+1]^{-1} \Big) 2 A\T S_{t_k} w_k \Big\| \Bigg)^2. \\
    \end{align*}
    Recalling that $\| A\T S_{t'} \P[t'+1]^{-1} \| \leq 1 - \gamma$ by Assumption~\ref{ass:controllability} yields
    \begin{align*}
        \leq{}&\frac{4 \|B\|^2}{\eigmin{R}} \Bigg( \sum_{t=0}^T ~ \sum_{\substack{t_k \in D \\ t_k>t}} (1 - \gamma)^{t_k - t} \Big\| A\T S_{t_k} \P[t_k+1]^{-1} \P[t_k+1] \w \Big\| \Bigg)^2 \\
        \leq{}&\frac{4 \Pb \wb \|B\|^2}{\eigmin{R}} \Bigg( \sum_{t=0}^T ~ \sum_{\substack{t_k \in D \\ t_k>t}} (1 - \gamma)^{t_k - t} \Bigg)^2 \\
        \leq{}&\frac{4 \Pb \wb \|B\|^2}{\eigmin{R}} \Bigg( \sum_{t_k \in D} ~ \sum_{t=0}^{t_k} (1 - \gamma)^{t_k - t} \Bigg)^2 \\
        \leq{}&\frac{4 \Pb \wb \|B\|^2}{\eigmin{R}} \Bigg( \frac{|D|}{1 - (1 - \gamma)} \Bigg)^2 \\
        \leq{}&\frac{4 \Pb \wb |D|^2 \|B\|^2}{\gamma^2 \eigmin{R}}.
    \end{align*}
    We combine bounds for \eqref{eqn:q0bound1}, \eqref{eqn:q0bound2}, \eqref{eqn:q0bound3}, reintroducing the factor $\sfrac{1}{4}$, to obtain
    \begin{align*}
        |q_0| &\leq (|D| + 2 |D|^2) \wb^2 \Pb \left( 1 + \frac{\|B\|^2 \Pb}{\eigmin{R}} \right) + \frac{\Pb \wb |D|^2 \|B\|^2}{\gamma^2 \eigmin{R}} \\
        &\leq 3 |D|^2 \wb^2 \Pb \left( 1 + \frac{\|B\|^2 \Pb}{\eigmin{R}} \right) + \frac{\Pb \wb |D|^2 \|B\|^2}{\gamma^2 \eigmin{R}} \\
        &\leq |D|^2 \wb \Pb \left( 3\wb + \frac{(3 \Pb \wb + \gamma^{-2}) \|B\|^2}{\eigmin{R}} \right).
    \end{align*}
    We combine this bound on $|q_0|$ with the bound \eqref{eqn:v0x0bound} for $|v_0\T x_0|$ to yield the result.
\end{proof}

\begin{figure*}[h!]
    \centering
    \begin{subfigure}[b]{0.4\linewidth}
        \includegraphics[width=\linewidth]{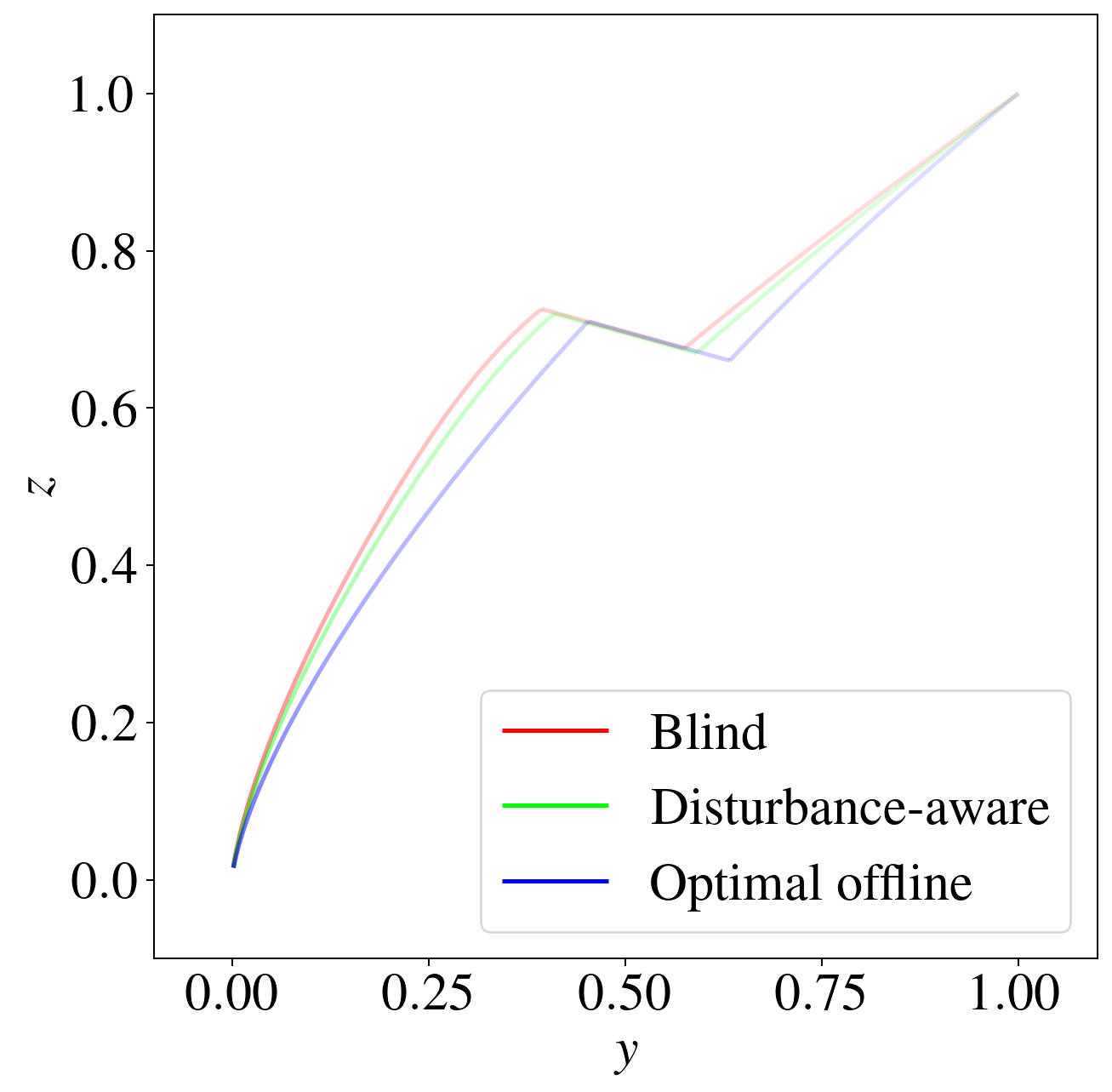}
        \phantomcaption{}
        \label{fig:trajectory}
    \end{subfigure}
    \hfill
    \begin{subfigure}[b]{0.57\linewidth}
        \includegraphics[width=\linewidth]{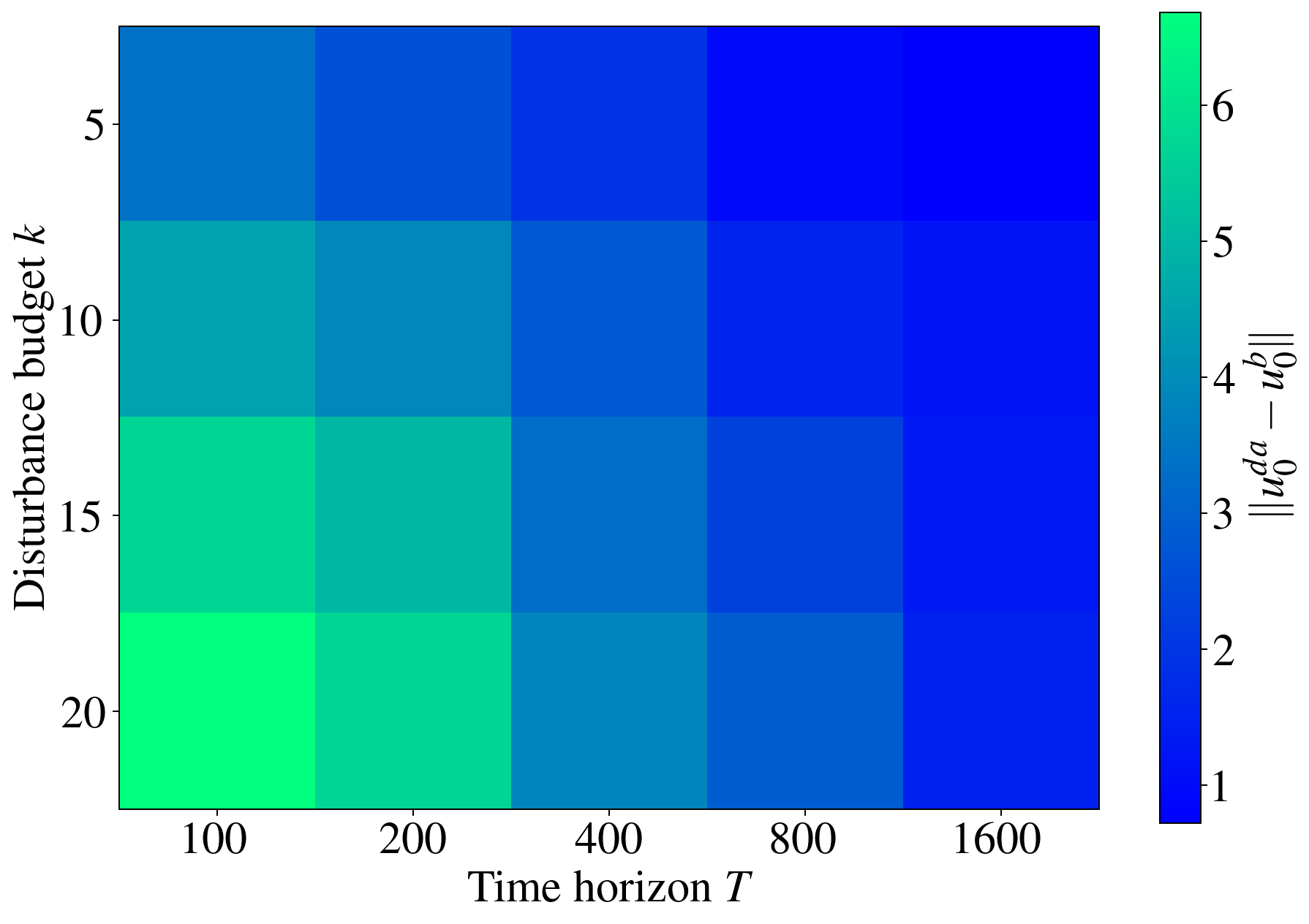}
        \phantomcaption{}
        \label{fig:sweep}
    \end{subfigure}
    \vspace{-0.5cm}
    \caption{
        (\subref{fig:trajectory}) Example trajectories for the system in Section~\ref{sec:experiments}. 
        (\subref{fig:sweep}) $\ell_2$ distance between the first action taken by the disturbance-aware policy ($u_0^{da}$) and blind policy ($u_0^b$).
    }
    \vspace{-0.5cm}
    \label{fig:main}
\end{figure*}

\section{Experiments}
\label{sec:experiments}

We illustrate our theoretical results on a planar double integrator system with the state
\[
    x = \begin{bmatrix} y & \dot y & z & \dot z \end{bmatrix} \T,
\]
for horizontal position $y$ and vertical position $z$. The dynamics and cost matrices are as follows:
\begin{align*}
    A \defeq \begin{bmatrix}
        1 & \Delta t & 0 & 0 \\
        0 & 1 & 0 & 0 \\
        0 & 0 & 1 & \Delta t \\
        0 & 0 & 0 & 1
    \end{bmatrix}, ~ 
    B \defeq \begin{bmatrix} 
        0 & 0 \\
        \Delta t & 0 \\
        0 & 0 \\
        0 & \Delta t 
    \end{bmatrix},
\end{align*}
\begin{align*}
    Q = Q_T \defeq \begin{bmatrix}
        2 & 0 & 0 & 0 \\
        0 & 10^{-3} & 0 & 0 \\
        0 & 0 & 1 & 0 \\
        0 & 0 & 0 & 10^{-3} 
        \end{bmatrix}, ~ 
    R \defeq \begin{bmatrix}
        10^{-2} & 0 \\
        0 & 10^{-2} 
    \end{bmatrix}.
\end{align*}
We evaluate the blind, disturbance-aware, and optimal offline policies from the initial condition $x_0 = [1, 0, 1, 0]$. For simplicity of visualization, we consider disturbances drawn uniformly from the sphere with radius $\wb = 0.3$ with no velocity components. The time step $\Delta t$ is $0.005$. The disturbance-aware policy is provided the uniform disturbance probability model $\Pr = k / (T - t)$.

Figure~\ref{fig:trajectory} shows example trajectories for the blind, disturbance-aware, and optimal offline policies with a single disturbance. The optimal offline policy anticipates the disturbance by preemptively steering the trajectory further downwards, while the disturbance-aware trajectory does the same but more cautiously since it does not know the precise disturbance location.
Figure~\ref{fig:sweep} shows the norm of the difference between the first action taken by the disturbance-aware policy ($u_0^{da}$) and blind policy ($u_0^b$) for various time horizons and disturbances. As expected from Theorem~\ref{thm:distawareconv}, as the time horizon increases or disturbance budget decreases, the actions of the disturbance-aware policy approach those of the blind policy.

\section{Conclusion}
This paper examines the problem of LQR control with bounded sparse perturbations. We derive the structure of the disturbance-aware policy, which knows the values of future disturbances but only has probabilistic knowledge of their locations in time. Under the condition that the number of disturbances grows sublinearly in the time horizon, we show that the disturbance-aware policy converges to the blind online policy. Section~\ref{sec:blindregret} examines the regret between the blind online policy and the optimal offline policy via a reduction to the cost of the blind online policy without disturbances. The regret bound is shown to scale quadratically in the number of disturbances and their magnitude, but only linearly in the norm of the initial state. We conclude with some synthetic experiments illustrating our findings.
%\section{ACKNOWLEDGMENTS}

%\bibliographystyle{IEEEtran}
%\bibliography{IEEEabrv,references}
%\label{ch:Bibliography}
\printbibliography

\end{document}